\newtheorem{theorem}{theorem}
\newtheorem{lemma}[theorem]{lemma}
\newtheorem{proposition}[theorem]{proposition}
\newtheorem{definition}[theorem]{definition}
\newcommand{\abs}[1]{\left\vert#1\right\vert}
\newcommand{\set}[1]{\left\{#1\right\}}
\newcommand{\CommentS}[1]{}
\VerbatimEnvironment\begin{Sbox}\begin{minipage}{0.9\textwidth}\begin{small}\begin{Verbatim}}%
\tikzstyle{weight} = [font=\small]
\tikzstyle{edge} = [draw,thick,->,every node/.style={font=\sffamily\small}]
\tikzstyle{vertex}=[circle,fill=blue!20,minimum size=20pt,inner sep=0pt]
\lstdefinelanguage{Smalltalk}{
  morekeywords={self,super,true,false,nil,thisContext}, 
  morestring=[d]',
  morecomment=[s]{"}{"},
  alsoletter={\#:},
  escapechar={!},
  literate=
    {BANG}{!}1
    {UNDERSCORE}{\_}1
    {\\st}{Smalltalk}9 
    {_}{{$\leftarrow$}}1
    {>>>}{{\sep}}1
    {^}{{$\uparrow$}}1
    {~}{{$\sim$}}1
    {-}{{\sf -\hspace{-0.13em}-}}1  
    {-->}{{\quad$\longrightarrow$\quad}}3
	, 
  tabsize=2
}[keywords,comments,strings]
\title{A Simple FPTAS for Counting Edge Covers}
\date{}
\begin{document}
\author{
	Chengyu Lin
	\thanks{Shanghai Jiao Tong University. {\tt linmrain@gmail.com}}
	\and
	Jingcheng Liu
	\thanks{Shanghai Jiao Tong University. {\tt liuexp@gmail.com}}
	\and
	Pinyan Lu\thanks{Microsoft Research. {\tt pinyanl@microsoft.com}}
}
\maketitle
\begin{abstract}
An edge cover of a graph is a set of edges such that every vertex has at least an adjacent edge in it. We design a very simple deterministic fully polynomial-time approximation scheme  (FPTAS) for counting the number of edge covers for any graph. Previously, approximation algorithm is only known for 3 regular graphs and it is randomized. Our main technique is correlation decay, which is a powerful tool to design FPTAS for counting problems. In order to get FPTAS for general graphs without degree bound, we make use of a stronger notion called computationally efficient correlation decay, which is introduced in~\cite{LLY12}.
\end{abstract}

\section{Introduction}
An edge cover of a graph is a set of edges such that every vertex has at least one adjacent edge in it.
 For any graph without isolated vertices, there is at least one edge cover: the set of all edges. So the decision problem is trivial. There is also a polynomial time algorithm based on maximum matching to compute an edge cover with minimum cardinality.     In this paper, we study the counting version:
 For a given input graph, we count the number of edge covers for that graph. Unlike the decision or optimization problem, counting edge covers is a \#P-complete problem even when we restrict the input to 3 regular graphs. In this paper, we study the approximation version. For any given parameter $\epsilon>0$, the algorithm outputs a number $\hat{N}$ such that $(1-\epsilon) N\leq \hat{N} \leq (1+\epsilon) N$, where $N$ is the accurate number of edge covers of the input graph. We also require that the running time of the algorithm is bounded by $poly(n,1/\epsilon)$, where $n$ is the number of vertices of the given graph. This is called a fully polynomial-time approximation scheme (FPTAS). Our main result of this paper is an FPTAS for counting edge covers for any graph. Previously, approximation algorithm was only known for 3 regular graphs and the algorithm is randomized~\cite{MFCS09}. The randomized relaxation of FPTAS is called fully polynomial-time randomized approximation scheme (FPRAS), which uses random bits in the algorithm and requires that the final output is within the range $[(1-\epsilon) N, (1+\epsilon) N]$ with high probability.

Edge cover is related to many other graph problems such as (perfect) matching, $k$-factor problems and so on. All these problems involve a set of edges satisfying some local constraints defined on each vertex. For edge cover, it says that at least one incident edge should be chosen; while for matching, it is at most one edge. For generic constraints, it is the Holant framework~\cite{STOC09,holant}, which is well studied in terms of exact counting~\cite{holant,HuangL12,CaiGW13}, and recently in approximate counting~\cite{YZ13,McQuillan2013,fibo-approx}. For counting matchings, there is an FPRAS based on Markov Chain Monte Carlo (MCMC) for any graph~\cite{jerrum1989approximating}. Deterministic FPTAS is only known for graphs with bounded degree~\cite{BGKNT07}. For counting perfect matchings, it is a long standing open question if there is an FPRAS (or FPTAS) for it. For bipartite graphs, there is an FPRAS for counting perfect matchings. The weighted version can be viewed as  computing permanent of a non-negative matrix~\cite{app_JSV04}. This is one great achievement of approximate counting. It is still widely open if there exists an FPTAS for it or not. The current best deterministic algorithm can only approximate the permanent with an exponential large factor~\cite{linial1998deterministic,gamarnik2010deterministic}. There are many other counting problems, for which there is an FPRAS and we do not know if there is an FPTAS or not. In this paper, we give a complete FPTAS for a problem, for which even FPRAS was only known for very special family of instances.

Another view point of the edge cover problem is read twice monotone CNF formula (Rtw-Mon-CNF): Each edge is viewed as a Boolean variable and it is connected with two vertices (read twice); the constraint on each vertex is exactly a monotone OR constrain as at least one edge variable is assigned to be True. Counting number of solutions for a Boolean formula is another set of interesting problems studied both in exact counting and approximate counting. One famous example is the FPRAS for counting the solutions for a  DNF formula~\cite{KarpL83,KarpLM89}. It is an important open question to derandomize it~\cite{Trevisan04,gopalan2012dnf}. Our FPTAS for counting edge covers can also be viewed as an FPTAS for counting solutions for a  Rtw-Mon-CNF formula. If we do not restrict that each variable appears in at most two constraints, there is no FPTAS or FPRAS unless NP is equal to P or RP~\cite{dicho_DGJ10}.

The common overall approach for designing approximate counting algorithms is to relate counting with probability distribution.
In the context of randomized counting, this is usually referred as ``counting vs sampling" paradigm. If we can compute (or estimate) the marginal probability, which in our problem is the probability of a given edge is chosen when we sample an edge cover uniformly at random, we can in turn approximate the count. In randomized FPRAS, one estimates the marginal probability by sampling, and the most successful approach is sampling by Markov chain~\cite{MC_JA96}.
In deterministic FPTAS, one calculates the marginal probability directly, and the most successful approach is correlation decay as introduced in \cite{BG08} and \cite{Weitz06}. We elaborate a bit on the ideas.
 The marginal probability is estimated using only a local neighborhood around the edge. To justify the precision of the estimation, we show that far-away edges have little influence on the marginal probability.
One most successful example is in anti-ferromagnetic two-spin systems~\cite{LLY12,SST,LLY13}, including counting independent sets~\cite{Weitz06}. The correlation decay based FPTAS is beyond the best known MCMC based FPRAS and achieves the boundary of approximability~\cite{SS12,galanis2012inapproximability}.
To the best of our knowledge, that was the only example for which the best tractable range for correlation decay based FPTAS exceeds the sampling based FPRAS. This paper provides another such example. FPRAS was \emph{the} solution concept for approximate counting~\cite{dich_DGGJ00}. The recent development of correlation decay based FPTAS is changing the picture. It is interesting to investigate the deep relation between these two approaches.

A number of tools were developed for establishing correlation decay property: self-avoiding walk tree, computation tree, potential function, dangling instance, bounded variables and so on. These are something like coupling argument, canonical path and so on to establish the rapid mixing property of Markov Chains~\cite{MC_JA96}.
Armed with these powerful tools, there are recently many FPTAS's designed for various counting problems~\cite{LLY12,SST,LLY13,YZ13,LY13,fibo-approx}.
Many of these techniques are also used in this paper for designing and analyzing the FPTAS for counting edge covers.

Usually, the correlation decay property only implies FPTAS for system with bounded degree. The reason is that
we need to explore a local neighborhood with radius of order $\log n$, then the total running time is $n^{\log n}$ if there is no degree bound. To overcome this, we make use of a stronger notion called computationally efficient correlation decay
as introduced in~\cite{LLY12}. The observation is that when we go through a vertex with super-constant degree, the error is also decreased by a super-constant rate. Thus we do not need to explore a depth of $\log n$ if the degrees are large. The tradeoff between degree and decay rate defined by computationally efficient correlation decay can support FPTAS with unbounded degree systems. Previously, this notion was only used in anti-ferromagnetic two-spin systems. In this paper, we prove that the distribution defined by edge covers also satisfies this stronger version of correlation decay and thus we give FPTAS for counting edge covers for any graph.

\section{Preliminaries}
An edge cover of a graph is a set of edges such that every vertex has at least one adjacent edge in it.
Given a graph $G=(V,E)$ with $e \in E$,  we use $EC(G)$ to denote the set of all edge covers of graph $G$, and $P(G, e)$ to denote the marginal probability over $EC(G)$ that edge $e$ is \emph{not} chosen, or formally, with $X \sim EC(G)$ uniformly,
\begin{equation}
	P(G, e) \triangleq \mathbb{P} \left(\textrm{edge $e$ is not chosen in $X$} \right)
	\label{defpge}
\end{equation}

In this paper, we deal with an extended notion of undirected graphs where dangling edges and free edges are allowed.
\begin{definition}
	A {\bf dangling edge} $e=(u,\_)$ of a graph is such singleton edge with exactly one end-point vertex $u$, as shown in the Figure \ref{fig:G}.

	A {\bf free edge} $e=(\_, \_)$ of a graph is such edge with no end-point vertex. 

\end{definition}

	We use graph to refer graph with or without dangling edges and free edges.
	Edges in the usual sense (i.e. neither dangling nor free), will be referred to as normal edges.

	We remark that an alternative view to these combinatorial definitions is from Rtw-Mon-CNF.
	A dangling edge is simply a variable which only appears at one clause, and a free edge is a variable
	that does not appear at all, whereas normal edge just corresponds to variables appearing twice.

For a graph $G=(V,E)$, an edge $e = (u,v) \in E$ and a vertex $u \in V$, define
\begin{align*}
    G - e \triangleq& (V, E-e) \\
    e - u \triangleq& (\_, v) \text{(note that here $v$ could be $\_$)} \\
    G - u \triangleq& (V - u, \\
                    &\set{e: e \in E, e\text{ is not incident with }u} \\
                    &\cup \set{e - u: e \in E, e\text{ is incident with }u})
\end{align*}


Note that here in edge set $E$, duplicates are allowed. We may have multiple dangling edges $(v,\_)$ and many free edges $(\_,\_)$. Recall that here edges are unordered pairs so we treat $(v,\_)$ and $(\_,v)$ as the same.

For example, given a degree-3 vertex $u$ with dangling edge $e$ shown in Figure \ref{fig:G} , the result of $e_1 - u$ is shown in Figure \ref{fig:e-u} and the result of $G-e-u\triangleq (G-e)-u$ is shown in Figure \ref{fig:G-e-u}.

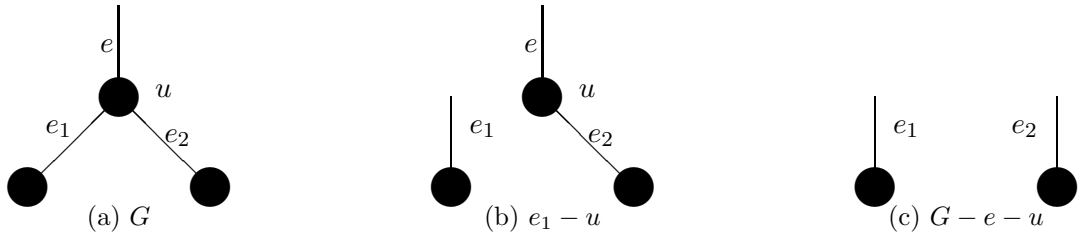
\begin{figure}[htp]
	\begin{subfigure}[b]{0.30\textwidth}
		\centering
		\setlength{\unitlength}{1.2mm}
		\begin{picture}(20,20)
			\put(0,0){\circle*{6}}
			\put(0,0){\line(1,1){10}}
			\put(20,0){\circle*{6}}
			\put(20,0){\line(-1,1){10}}
			\put(10,10){\circle*{6}}
			\put(10,10){\line(0,1){10}}
			\put(14,10){$u$}
			\put(8,15){$e$}
			\put(2,6){$e_1$}
			\put(15,5){$e_2$}
		\end{picture}
		\caption{$G$}
		\label{fig:G}
	\end{subfigure}
	\hfill
    \begin{subfigure}[b]{0.30\textwidth}
		\centering
		\setlength{\unitlength}{1.2mm}
		\begin{picture}(20,20)
			\put(0,0){\circle*{6}}
			\put(0,0){\line(0,1){10}}
			\put(20,0){\circle*{6}}
			\put(20,0){\line(-1,1){10}}
			\put(10,10){\circle*{6}}
			\put(10,10){\line(0,1){10}}
			\put(14,10){$u$}
			\put(8,15){$e$}
			\put(2,6){$e_1$}
			\put(15,5){$e_2$}
		\end{picture}
		\caption{$e_1-u$}
		\label{fig:e-u}
	\end{subfigure}
    \hfill
	\begin{subfigure}[b]{0.30\textwidth}
		\centering
		\setlength{\unitlength}{1.2mm}
		\begin{picture}(20,20)
			\put(0,0){\circle*{6}}
			\put(0,0){\line(0,1){10}}
			\put(20,0){\circle*{6}}
			\put(20,0){\line(0,1){10}}
			\put(2,6){$e_1$}
			\put(15,6){$e_2$}
		\end{picture}
		\caption{$G-e-u$}
		\label{fig:G-e-u}
	\end{subfigure}
	\caption{Dangling edges examples.}
\end{figure}

We use $0$ to denote scalar value $0$, and $\mathbf{0}$ to denote the all-zero vector, and $\set{e_i}_{i=1}^{d}$ denote the $d$-dimensional vector with $i$-th coordinate being $e_i$, so $\set{e_i} = \mathbf{0}$ means $\forall i, e_i = 0$.
We also use the convention that when $d=0, \prod_i^d p_i \triangleq 1$.

In general we use $n$ to refer to the number of vertices in a given graph, and $m$ to refer to the number of edges.

\section{The Computation Tree Recursion}
In this section, we provide a recursion for computing the marginal probability $P(G, e)$ with that of smaller instances.

\subsection{$e$ is free}
\begin{proposition}
	\[P(G,e) = \frac{1}{2}\]
\end{proposition}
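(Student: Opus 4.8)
The plan is to use the definition of $P(G,e)$ directly, exploiting the fact that a free edge corresponds to a Boolean variable that appears in no clause, so its value is completely unconstrained. First I would observe that if $e$ is a free edge, then whether $e$ is chosen or not plays no role whatsoever in the edge-cover constraints: every vertex-constraint is defined on normal and dangling edges incident to that vertex, and a free edge has no end-point. Hence the set $EC(G)$ decomposes as a product: writing $G'$ for the graph $G$ with the free edge $e$ removed, there is a natural bijection $EC(G) \cong EC(G') \times \set{0,1}$, where the second coordinate records whether $e$ is chosen. In particular $\abs{EC(G)} = 2\abs{EC(G')}$, and exactly half of the edge covers of $G$ omit $e$.

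Concretely, I would pair up each edge cover $X$ of $G$ with the set $X \triangle \set{e}$ (symmetric difference), i.e.\ toggle the membership of $e$. Since $e$ is free, $X$ is an edge cover if and only if $X \triangle \set{e}$ is, so this is a fixed-point-free involution on $EC(G)$ that pairs each edge cover containing $e$ with one not containing $e$. Therefore the number of edge covers not containing $e$ equals the number containing $e$, each being $\abs{EC(G)}/2$. Plugging into~\eqref{defpge} with $X \sim EC(G)$ uniform gives $P(G,e) = \tfrac12$.

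There is essentially no obstacle here; the only thing to be careful about is the degenerate edge cases — e.g.\ if $G$ consists only of free edges, or has isolated vertices — but the involution argument is insensitive to these, since it never touches any vertex-constraint. I would state the bijection/involution cleanly and let the count follow immediately, keeping the proof to two or three lines.
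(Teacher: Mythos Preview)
Your proposal is correct and matches the paper's own proof essentially verbatim: the paper also observes that toggling the free edge $e$ gives a one-to-one correspondence between edge covers containing $e$ and those not containing $e$, whence $P(G,e)=\tfrac12$. Your involution $X\mapsto X\triangle\{e\}$ is precisely that correspondence made explicit.
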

\begin{proof}
	If $e$ is a free edge, then any edge cover with $e$ chosen is in one-to-one correspondence to an edge cover with $e$ not chosen. Hence exactly half of the edge covers in $EC(G)$ does not choose $e$, so $P(G, e) = \frac{1}{2}$.
\end{proof}

\subsection{$e$ is dangling}
\begin{lemma}
For graph $G=(V,E)$ with a dangling edge $e=(u,\_)$, denote $d$
edges incident with $u$ except $e$ as $e_1, e_2, \ldots, e_d$,
let $G_1 \triangleq G - e - u$, and $\forall i \geq 2, G_i \triangleq G_{i-1} - e_{i-1}$ ,
	\begin{equation}
		P(G, e) = \frac{1-\prod_{i=1}^d P(G_i, e_i)}{2 - \prod_{i=1}^d P(G_i, e_i)} 
		\label{propp3rg}
	\end{equation}
\end{lemma}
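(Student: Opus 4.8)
The plan is to compute $P(G,e)$ by conditioning on whether the dangling edge $e=(u,\_)$ is chosen, exploiting that $e$ sits in no constraint other than ``$u$ is covered''. Let $Z_0$ (resp. $Z_1$) be the number of edge covers of $G$ not containing (resp. containing) $e$, so that $P(G,e)=Z_0/(Z_0+Z_1)$; we may assume $EC(G)\neq\emptyset$ (otherwise $P(G,e)$ is undefined), and then $Z_1\ge 1$ because adjoining the dangling edge $e$ to any edge cover again produces an edge cover.

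First I would set up two bijections obtained by deleting $e$ and $u$, i.e. by passing to $G_1=G-e-u$ (in which each $e_i$ survives as the dangling-or-free edge $e_i-u$; as usual I keep writing $e_i$ for its surviving stub). An edge cover of $G$ containing $e$ automatically covers $u$, and restricting it to $G_1$ is a bijection onto $EC(G_1)$, giving $Z_1=|EC(G_1)|$. An edge cover of $G$ avoiding $e$ must cover $u$ through some $e_i$, and the same restriction is a bijection onto the edge covers of $G_1$ that contain at least one of $e_1,\dots,e_d$; complementing inside $EC(G_1)$ and using that forbidding an edge is the same as deleting it, $Z_0=|EC(G_1)|-|EC(G_1-e_1-\cdots-e_d)|$. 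Writing $G_{d+1}\triangleq G_d-e_d$ to continue the given pattern $G_{i+1}=G_i-e_i$, this reads $Z_0=|EC(G_1)|-|EC(G_{d+1})|$.

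The remaining point is to recognise $|EC(G_{d+1})|/|EC(G_1)|$ as $\prod_{i=1}^d P(G_i,e_i)$: for each $i$ the ``forbidding $=$ deleting'' observation gives $P(G_i,e_i)=|EC(G_i-e_i)|/|EC(G_i)|=|EC(G_{i+1})|/|EC(G_i)|$, and the product telescopes (equivalently, this is the chain rule for the probability that a uniform edge cover of $G_1$ uses none of $e_1,\dots,e_d$). Substituting $Z_1=|EC(G_1)|$ and $|EC(G_{d+1})|=Z_1\prod_{i=1}^d P(G_i,e_i)$ into $P(G,e)=(Z_1-|EC(G_{d+1})|)/(2Z_1-|EC(G_{d+1})|)$ and dividing through by $Z_1$ yields \eqref{propp3rg}; the case $d=0$ (empty product $=1$ by convention, and $e$ the only edge at $u$) correctly gives $P(G,e)=0$.

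I expect the only delicate point to be the telescoping step when some intermediate $G_i$ has $EC(G_i)=\emptyset$ — an isolated vertex can appear once $e_1,\dots,e_{i-1}$ have been deleted — so that $|EC(G_{i+1})|/|EC(G_i)|$ is a $0/0$. I would dispose of this by noting that if $i$ is the first such index then every edge cover of $G_{i-1}$ must use $e_{i-1}$, hence $P(G_{i-1},e_{i-1})=0$ and the product is genuinely $0$, while at the same time isolation persists under further edge deletion, so $|EC(G_{d+1})|=0$; thus both sides of \eqref{propp3rg} equal $\tfrac12$ and the identity survives the degenerate cases (and in the algorithm the recursion simply halts at that first zero factor).
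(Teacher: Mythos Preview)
Your argument is correct and is essentially the same as the paper's: the paper writes $Z=|EC(G-e-u)|$ and $Z_{\mathbf 0}$ for the count with all $e_i$ forbidden, obtains $P(G,e)=(Z-Z_{\mathbf 0})/(2Z-Z_{\mathbf 0})$, and then factors $Z_{\mathbf 0}/Z$ by the chain rule---exactly your $Z_1=|EC(G_1)|$, $|EC(G_{d+1})|$, and telescoping product. Your explicit treatment of the degenerate case $EC(G_i)=\emptyset$ (where the chain rule would be $0/0$) is a welcome addition that the paper's proof glosses over.
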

\begin{proof}
	For $\boldsymbol\alpha \in \set{0,1}^d$, let $EC_{\boldsymbol\alpha}(G-e-u)$ be the set of edge covers in $G-e-u$ such that its restriction onto $\set{e_i}_{i=1}^{d}$ is consistent with $\boldsymbol\alpha$, denote $Z_{\boldsymbol\alpha} = \abs{EC_{\boldsymbol\alpha}(G-e-u)}$, and $Z = \sum_{\boldsymbol\alpha \in \set{0,1}^d} Z_{\boldsymbol\alpha} = \abs{EC(G-e-u)}$. 

		Also note that as long as $\boldsymbol\alpha \neq 0$, counting edge covers with restriction $\boldsymbol\alpha$ is the same in either $G$, $G-e$, or $G-e-u$, so it is enough to work with $G-e-u$. Note that in $G-e-u$, for every $i$, $e_i$ is either dangling or free, but not normal.
	\begin{align*}
		P(G,e) = & \frac{\abs{EC(G-e)}}{\abs{EC(G)}} \\
        = & \frac{\sum_{\boldsymbol\alpha \in \set{0,1}^d, \boldsymbol\alpha \neq \mathbf{0}} Z_{\boldsymbol\alpha} }{ Z_{\mathbf{0}} + 2 \sum_{\boldsymbol\alpha \in \set{0,1}^d, \boldsymbol\alpha \neq \mathbf{0}} Z_{\boldsymbol\alpha}} \\
        = &\frac{Z - Z_0}{2 Z - Z_0} \\
        = &\frac{1 - \frac{Z_{\mathbf{0}}}{Z}}{ 2 - \frac{Z_{\mathbf{0}}}{Z}}.
	\end{align*}

	Now consider the term $\frac{Z_{\mathbf{0}}}{Z}$, it says the probability that a uniformly random edge cover drawn from $EC(G-e-u)$ picked none of $\set{e_i}_{i=1}^{d}$, so
	\begin{align*}
        \frac{Z_{\mathbf{0}}}{Z} = & \mathbb{P} \left( \set{e_i} = \mathbf{0}\right) \\
        = & \mathbb{P} (e_1 = 0) \prod_{i=2}^d \mathbb{P} \left(e_i = 0 \mid \set{e_j}_{j=1}^{i-1} =\mathbf{0}\right) \\
        = &\prod_{i=1}^d P(G_i, e_i)
	\end{align*}

	Hence by substitution we have
    $$P(G, e) = \frac{1-\prod_{i=1}^d P(G_i, e_i)}{2 - \prod_{i=1}^d P(G_i, e_i)}$$
	
\end{proof}

We remark that for every $i$, $e_i$ is dangling or free in $G_i$.

\subsection{$e$ is a normal edge}

	For $e=(u,v)$ as a normal edge, let $\set{e_i}$ be the set of edges incident with vertex $u$ except $e$, and $\set{f_i}$ be the set of edges incident with vertex $v$ except $e$, and $d_1 = \abs{\set{e_i}}, d_2 = \abs{\set{f_i}}$, now for $\boldsymbol\alpha \in \set{0,1}^{d_1}, \boldsymbol\beta \in \set{0,1}^{d_2}$, we use $EC_{\boldsymbol\alpha,\boldsymbol\beta}(G)$ to denote the set of edge covers for $G$ such that its restriction to $\set{e_i}_{i=1}^{d_1}$ is consistent with $\boldsymbol\alpha$, and restriction to $\set{f_i}_{i=1}^{d_2}$ is consistent with $\boldsymbol\beta$.

	Denote $Z_{\boldsymbol\alpha, \boldsymbol\beta}^G \triangleq \abs{EC_{\boldsymbol\alpha, \boldsymbol\beta}(G)}$, $G' \triangleq G-e, G'' \triangleq G-e-u-v$. As an illustration, given a normal edge $e=(u,v)$ in $G$ as in Figure \ref{fig:generalG}, $G'$ and $G''$ are Figure \ref{fig:generalG-e} and Figure \ref{fig:generalG-e-u-v} respectively.

\begin{figure}[htp]
	\begin{subfigure}[b]{0.30\textwidth}
		\centering
		\setlength{\unitlength}{1.2mm}
		\begin{picture}(20,30)
			\put(0,0){\circle*{6}}
			\put(0,0){\line(1,1){10}}
			\put(20,0){\circle*{6}}
			\put(20,0){\line(-1,1){10}}
			\put(10,10){\circle*{6}}
			\put(10,10){\line(0,1){10}}
			\put(10,20){\circle*{6}}
			\put(10,20){\line(1,1){10}}
			\put(10,20){\line(-1,1){10}}
			\put(20,30){\circle*{6}}
			\put(0,30){\circle*{6}}
			\put(14,20){$v$}
			\put(14,10){$u$}
			\put(8,14.5){$e$}
			\put(2,6){$e_1$}
			\put(15,5){$e_2$}
			\put(3,28){$f_1$}
			\put(13,28){$f_2$}
		\end{picture}
		\caption{$G$}
		\label{fig:generalG}
	\end{subfigure}
	\hfill
	\begin{subfigure}[b]{0.30\textwidth}
		\centering
		\setlength{\unitlength}{1.2mm}
		\begin{picture}(20,30)
			\put(0,1){\circle*{6}}
			\put(0,0){\line(1,1){10}}
			\put(20,1){\circle*{6}}
			\put(20,0){\line(-1,1){10}}
			\put(10,10){\circle*{6}}
			\put(10,20){\circle*{6}}
			\put(10,20){\line(1,1){10}}
			\put(10,20){\line(-1,1){10}}
			\put(20,30){\circle*{6}}
			\put(0,30){\circle*{6}}
			\put(14,20){$v$}
			\put(14,10){$u$}
			\put(2,6){$e_1$}
			\put(15,5){$e_2$}
			\put(3,28){$f_1$}
			\put(13,28){$f_2$}
		\end{picture}
		\caption{$G'$}
		\label{fig:generalG-e}
	\end{subfigure}
	\hfill
	\begin{subfigure}[b]{0.30\textwidth}
		\centering
		\setlength{\unitlength}{1.2mm}
		\begin{picture}(20,20)
			\put(0,1){\circle*{6}}
			\put(0,1){\line(0,1){10}}
			\put(20,1){\circle*{6}}
			\put(20,1){\line(0,1){10}}
			\put(20,30){\circle*{6}}
			\put(20,30){\line(0,-1){10}}
			\put(0,30){\circle*{6}}
			\put(0,30){\line(0,-1){10}}
			\put(2,6){$e_1$}
			\put(15,6){$e_2$}
			\put(2,24){$f_1$}
			\put(15,24){$f_2$}
		\end{picture}
		\caption{$G''$}
		\label{fig:generalG-e-u-v}
	\end{subfigure}
	\caption{Normal edge examples.}
\end{figure}
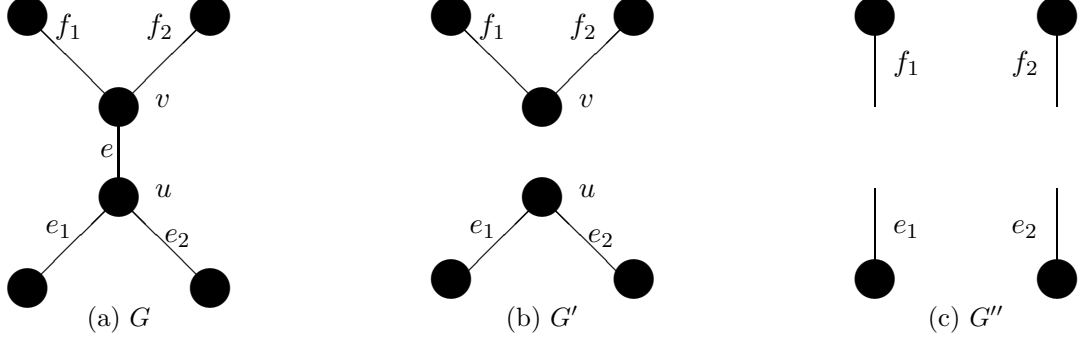

By definition we have

\begin{equation}
	P(G,e) = \frac{\abs{EC(G')}}{\abs{EC(G')} + \abs{EC(G'')} }
\end{equation}

	Note that as long as $\boldsymbol\alpha \neq \mathbf{0} , \boldsymbol\beta \neq \mathbf{0}$, working with $G'$ and working with $G''$ is the same with restriction to $\boldsymbol\alpha$ and $\boldsymbol\beta$, or formally,
\[\abs{EC(G')} = \sum_{\boldsymbol\alpha \neq \mathbf{0}, \boldsymbol\beta \neq \mathbf{0}} Z_{\boldsymbol\alpha, \boldsymbol\beta}^{G'} = \sum_{\boldsymbol\alpha \neq \mathbf{0}, \boldsymbol\beta \neq \mathbf{0}} Z_{\boldsymbol\alpha, \boldsymbol\beta}^{G''}\]

Since only $G''$ is involved, denote

$Z_{\boldsymbol\alpha, \boldsymbol\beta} \triangleq Z_{\boldsymbol\alpha, \boldsymbol\beta}^{G''}$

$Z \triangleq \sum_{\boldsymbol\alpha \in \set{0,1}^{d_1} , \boldsymbol\beta \in \set{0,1}^{d_2}} Z_{\boldsymbol\alpha, \boldsymbol\beta}$

    $G_i^1 \triangleq G'' - \sum_{k=1}^{i-1} e_k$

    $G_i^2 \triangleq G'' - \sum_{k=1}^{d_1}e_k - \sum_{k=1}^{i-1} f_k$

    $G_i^3 \triangleq G'' - \sum_{k=1}^{i-1} f_k$

\begin{lemma}
	Let $X = \prod_{i=1}^{d_1} P(G_i^1, e_i)$, 

$Y = \prod_{i=1}^{d_2} P(G_i^2, f_i), Z = \prod_{i=1}^{d_2} P(G_i^3, f_i)$,
	\begin{equation}
	P(G,e) =  1 - \frac{1}{2 + X \cdot Y - X - Z}
		\label{recursion2}
	\end{equation}

\end{lemma}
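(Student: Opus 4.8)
The plan is to mirror the two-step argument used for the dangling-edge lemma, but now applying inclusion--exclusion at \emph{both} endpoints $u$ and $v$ of the normal edge $e$. First I would rewrite the defining identity $P(G,e) = \frac{\abs{EC(G')}}{\abs{EC(G')}+\abs{EC(G'')}}$ as
\[
P(G,e) = 1 - \frac{\abs{EC(G'')}}{\abs{EC(G')} + \abs{EC(G'')}},
\]
and then divide numerator and denominator of the fraction by $\abs{EC(G'')} = Z$. This reduces the lemma to establishing that $\frac{\abs{EC(G')}}{Z} = 1 + X\cdot Y - X - Z$, where on the right-hand side $Z$ now denotes the product $\prod_{i=1}^{d_2} P(G_i^3, f_i)$ as in the statement.

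For the left-hand side I would start from the identity already recorded in the text, $\abs{EC(G')} = \sum_{\boldsymbol\alpha \neq \mathbf{0}, \boldsymbol\beta \neq \mathbf{0}} Z_{\boldsymbol\alpha, \boldsymbol\beta}$, and apply inclusion--exclusion over the two events $\boldsymbol\alpha = \mathbf{0}$ and $\boldsymbol\beta = \mathbf{0}$:
\[
\abs{EC(G')} = Z - \sum_{\boldsymbol\beta} Z_{\mathbf{0}, \boldsymbol\beta} - \sum_{\boldsymbol\alpha} Z_{\boldsymbol\alpha, \mathbf{0}} + Z_{\mathbf{0}, \mathbf{0}}.
\]
Dividing through by $Z$, it then suffices to identify the three ratios $\frac{1}{Z}\sum_{\boldsymbol\beta} Z_{\mathbf{0},\boldsymbol\beta}$, $\frac{1}{Z}\sum_{\boldsymbol\alpha} Z_{\boldsymbol\alpha,\mathbf{0}}$ and $\frac{1}{Z} Z_{\mathbf{0},\mathbf{0}}$ with $X$, with the product $\prod_{i=1}^{d_2}P(G_i^3,f_i)$, and with $X\cdot Y$, respectively.

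Each of these ratios is a marginal probability over a uniformly random edge cover of $G''$, and I would compute each one exactly as in the dangling case, by peeling edges off one at a time with the chain rule. For $\frac{1}{Z}\sum_{\boldsymbol\beta} Z_{\mathbf{0},\boldsymbol\beta} = \mathbb{P}(\set{e_i}=\mathbf{0})$ I would write $\mathbb{P}(\set{e_i}=\mathbf{0}) = \prod_{i=1}^{d_1}\mathbb{P}\!\left(e_i = 0 \mid e_1=\cdots=e_{i-1}=0\right)$, observe that conditioning an edge cover of $G''$ on not using $e_1,\dots,e_{i-1}$ is exactly the same as taking an edge cover of $G'' - e_1 - \cdots - e_{i-1} = G_i^1$, so each factor equals $P(G_i^1,e_i)$ and the product is $X$. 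The symmetric computation with the $f_i$'s gives $\frac{1}{Z}\sum_{\boldsymbol\alpha} Z_{\boldsymbol\alpha,\mathbf{0}} = \prod_{i=1}^{d_2} P(G_i^3,f_i)$. For the last ratio I would factor $\mathbb{P}(\set{e_i}=\mathbf{0},\ \set{f_i}=\mathbf{0}) = \mathbb{P}(\set{e_i}=\mathbf{0})\cdot\mathbb{P}(\set{f_i}=\mathbf{0}\mid\set{e_i}=\mathbf{0})$; the first factor is $X$, and for the second I would first condition out all of $e_1,\dots,e_{d_1}$ (landing in $G'' - \sum_{k=1}^{d_1}e_k$) and then peel the $f_i$'s, so that the factors become $P(G_i^2,f_i)$ and the product is $Y$. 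Substituting everything back gives $\frac{\abs{EC(G')}}{Z} = 1 - X - Z + XY$, hence $P(G,e) = 1 - \frac{1}{2 + XY - X - Z}$.

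I expect the routine parts to be the chain-rule bookkeeping and keeping the three deletion orders $G_i^1, G_i^2, G_i^3$ straight. The one place needing genuine care is justifying that ``conditioning on an edge being unchosen'' coincides with edge deletion even when dangling and free edges are present --- a vertex whose only remaining incident edge gets deleted just forces the corresponding conditional count to be zero, which is consistent --- and, if the input may contain an edge parallel to $e$ (which becomes free in $G''$ and is listed among both $\set{e_i}$ and $\set{f_i}$), making sure such an edge is not peeled twice. This is the main subtlety to pin down, but it does not alter the shape of the final recursion.
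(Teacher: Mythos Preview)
Your proposal is correct and follows essentially the same route as the paper: inclusion--exclusion on $\abs{EC(G')}=\sum_{\boldsymbol\alpha\neq\mathbf{0},\boldsymbol\beta\neq\mathbf{0}}Z_{\boldsymbol\alpha,\boldsymbol\beta}$, division by $Z=\abs{EC(G'')}$, and chain-rule identification of the three ratios with $X$, $\prod_i P(G_i^3,f_i)$, and $XY$. Your extra remarks about degenerate conditioning and parallel edges flag subtleties the paper leaves implicit, but do not change the argument.
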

\begin{proof}

	\begin{align*}
P(G,e) &= \frac{\sum_{\boldsymbol\alpha \neq \mathbf{0}, \boldsymbol\beta \neq \mathbf{0}} Z_{\boldsymbol\alpha, \boldsymbol\beta}}{Z + \sum_{\boldsymbol\alpha \neq \mathbf{0}, \boldsymbol\beta \neq \mathbf{0}} Z_{\boldsymbol\alpha, \boldsymbol\beta}} \\
&=\frac{Z - \sum_{\boldsymbol\alpha}Z_{\boldsymbol\alpha,\mathbf{0}} - \sum_{\boldsymbol\beta} Z_{\mathbf{0}, \boldsymbol\beta} + Z_{\mathbf{0}, \mathbf{0}}}{2Z - \sum_{\boldsymbol\alpha}Z_{\boldsymbol\alpha,\mathbf{0}} - \sum_{\boldsymbol\beta} Z_{\mathbf{0}, \boldsymbol\beta} + Z_{\mathbf{0}, \mathbf{0}}} \\
&= 1 - \frac{1}{2 + \frac{Z_{\mathbf{0},\mathbf{0}}}{Z} - \frac{\sum_{\boldsymbol\beta} Z_{\mathbf{0}, \boldsymbol\beta} }{Z} - \frac{\sum_{\boldsymbol\alpha} Z_{ \boldsymbol\alpha , \mathbf{0}} }{Z}}
	\end{align*}

	Denote $\mathbb{P} \left( \boldsymbol\alpha = \mathbf{0}, \boldsymbol\beta = \mathbf{0} \right) \triangleq \frac{Z_{\mathbf{0},\mathbf{0}}}{Z}$,
 
 $\mathbb{P} \left( \boldsymbol\alpha = \mathbf{0} \right) \triangleq \frac{\sum_{\boldsymbol\beta} Z_{\mathbf{0}, \boldsymbol\beta} }{Z}, \mathbb{P} \left( \boldsymbol\beta = \mathbf{0} \right) \triangleq \frac{\sum_{\boldsymbol\alpha} Z_{ \boldsymbol\alpha , \mathbf{0}} }{Z}$.

Now consider the three terms respectively,
	\begin{align*}
        \mathbb{P}\left( \boldsymbol\alpha = \mathbf{0}\right) =& \mathbb{P} \left( \set{e_i} = \mathbf{0}\right) =	\prod_{i=1}^{d_1} P(G_i^1, e_i) \\
        \mathbb{P}\left( \boldsymbol\beta = \mathbf{0}\right) =& \mathbb{P} \left( \set{f_i} = \mathbf{0}\right) =	\prod_{i=1}^{d_2} P(G_i^3, f_i)
    \end{align*}
	\begin{align*}
        &\mathbb{P}\left( \boldsymbol\alpha = \mathbf{0}, \boldsymbol\beta = \mathbf{0} \right) \\
        =&  \mathbb{P} \left( \boldsymbol\alpha = \mathbf{0} \right) \cdot \mathbb{P}\left( \boldsymbol\beta = \mathbf{0} \mid \boldsymbol\alpha = \mathbf{0} \right) \\
        =&  \mathbb{P} \left( \set{e_i} = \mathbf{0}\right) \cdot \mathbb{P} \left( \set{f_i} = \mathbf{0} \mid \set{e_i} = \mathbf{0} \right) \\
        =& \prod_{i=1}^{d_1} \mathbb{P} \left( e_i = 0 \mid \set{e_j}_{j=1}^{i-1} = \mathbf{0} \right) \\
         &\cdot \prod_{i=1}^{d_2} \mathbb{P} \left( f_i = 0 \mid \set{e_j}_{j=1}^{d_1} = \mathbf{0},\set{f_j}_{j=1}^{i-1} = \mathbf{0} \right) \\
        =& \prod_{i=1}^{d_1} P(G_i^1, e_i) \cdot \prod_{i=1}^{d_2} P(G_i^2, f_i)
	\end{align*}

	Hence equation (\ref{recursion2}) is verified.
\end{proof}

Remark that for every $i$, $e_i$ is dangling or free in $G_i^1$, $f_i$ is dangling or free in $G_i^3$, and in $G_i^2$, neither $e_i$ nor $f_i$ is normal.

\section{Estimating Marginal Probability}

\begin{algorithm}
\SetKwInOut{Input}{input}\SetKwInOut{Output}{output}
\emph{ \textbf{function} $P(G, e, L):$}
\BlankLine
\Input{Graph $G$; edge $e$; Recursion depth $L$; }
\Output{Estimate of $P(G,e)$ up to depth $L$ .}
\Begin{
	\If{$L\leq0$ }{\Return{ $\frac{1}{2}$}}
	\ElseIf{$e$ is free }{
		\Return{ $\frac{1}{2}$}\;
	}
	\ElseIf{$e$ is dangling }{
		$L' \leftarrow L - \lceil \log_6{(d+1)} \rceil$\;
		\Return{ $\frac{1-\prod_{i=1}^d P(G_i, e_i, L')}{2 - \prod_{i=1}^d P(G_i, e_i, L')}$} \;
	}
	\Else(// $e$ is normal ){
		$X \leftarrow \prod_{i=1}^{d_1} P(G_i^1, e_i, L)$\;
		$Y \leftarrow \prod_{i=1}^{d_2} P(G_i^2, f_i, L)$\;
		$Z \leftarrow \prod_{i=1}^{d_2} P(G_i^3, f_i, L)$\;
		\Return{ $1 - \cfrac{1}{2+ X\cdot Y  - X - Z }$ }\;
	}
 }
 \caption{Estimate $P(G,e)$}
\end{algorithm}

We may compute the marginal probability $P(G, e)$ exactly with the previous recursion, but
that could take recursion depth of $O(n)$ which results in exponential computation time.
So here we use a truncated computation tree for an estimate of $P(G,e)$.

As a remark, the recursion depth used here is actually the so-called $M$-based depth introduced in \cite{LLY12} with $M=6$.


Note that the normal case is invoked only once, so the algorithm keeps exploring in the third cases, until it hits the first 2 cases. We remark that an alternative view of the recursion depth is, we replace every node with degree greater than 6 with a $6$-ary branching subtree.
Now with this alternative view, it is easy to see that the nodes involved in the branching tree up to depth $L$ is at most $6^L$,
and for the initial normal edge case it involves at most $n$ subtrees, and for second-to-base-case nodes (i.e. nodes with $0<L \leq \lceil \log_6{(d+1)} \rceil $ ) they involve at most $n$ extra base cases,
so the algorithm $P(G,e,L)$ has running time $O(n^2 \cdot 6^L)$.
%



\section{Correlation Decay Property}

In the last section, we show an algorithm $P(G,e,L)$ for estimating the marginal probability $P(G,e)$,
so here we establish the exponential correlation decay property, in the stronger sense with the
$M$-based depth,
of the estimation error in $P(G,e,L)$.

\begin{theorem}
	\label{cd-main-theorem}
	Given graph $G$, edge $e$ and depth $L$,
	\[\abs{P(G,e,L) - P(G,e)} \leq 3\cdot(\frac{1}{2})^{L+1}\]
\end{theorem}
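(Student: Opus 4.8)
The plan is to prove the bound by induction on the recursion depth $L$, tracking the estimation error through each of the three branches of the algorithm. The base case $L \le 0$ is immediate: the algorithm returns $\frac{1}{2}$, and since $P(G,e) \in [0,1]$ (in fact one can say more, but this crude bound suffices), we have $|P(G,e,L) - P(G,e)| \le \frac{1}{2} = 3 \cdot (\frac{1}{2})^{1} \ge 3 \cdot (\frac{1}{2})^{L+1}$ for $L \le 0$. The free-edge case is also trivial since the algorithm returns exactly $P(G,e) = \frac{1}{2}$.

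For the inductive step I would handle the two nontrivial recursive cases separately, using the exact recursions from the two lemmas together with the inductive hypothesis applied at the reduced depths. In the dangling case, write $p_i = P(G_i, e_i)$ and $\hat p_i = P(G_i, e_i, L')$ with $L' = L - \lceil \log_6(d+1)\rceil$; the true value is $f(\prod p_i)$ and the estimate is $f(\prod \hat p_i)$ where $f(t) = \frac{1-t}{2-t}$. I would bound $|f(\prod p_i) - f(\prod \hat p_i)|$ by first bounding $|\prod p_i - \prod \hat p_i|$ via a telescoping/product-rule argument (each $p_i, \hat p_i \in [0,1]$, so the product error is at most $\sum_i |p_i - \hat p_i| \le d \cdot 3(\tfrac12)^{L'+1}$ by the inductive hypothesis), and then multiplying by $\sup |f'| $ on $[0,1]$. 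Here $f'(t) = -1/(2-t)^2$, so $|f'| \le 1$ on $[0,1]$; actually a sharper bound like $|f'(t)| \le \tfrac14$ on the relevant range will be what makes the constants work. The key inequality to verify is then that $d \cdot 3(\tfrac12)^{L'+1} \cdot (\text{Lipschitz constant}) \le 3(\tfrac12)^{L+1}$, i.e. that the amplification factor $d/(d+1)^{\log_6 2}$ times the Lipschitz constant is at most $1$; this is precisely why the $M$-based depth with $M=6$ and the $\log_6$ discount are chosen, since $d/(d+1)^{\log_6 2} \le d / d^{\log_6 2} = d^{1-\log_6 2}$ grows, so we genuinely need the Lipschitz factor to beat it, and checking this is the crux of the calculation.

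For the normal-edge case, the depth is \emph{not} decreased, so the argument must be that this branch contracts the error on its own. Here $P(G,e) = 1 - 1/(2 + XY - X - Z)$ with $X,Y,Z$ the true products and $\hat X, \hat Y, \hat Z$ the estimates; note $X, Z \in [0,1]$ and $Y \in [0,1]$, and one should check $2 + XY - X - Z \ge 1$ so the denominator is bounded below, giving a Lipschitz bound on $g(X,Y,Z) = 1 - 1/(2+XY-X-Z)$ with respect to each argument. The errors $|\hat X - X|$ etc.\ are each $\le 3(\tfrac12)^{L+1}$ by the inductive hypothesis (applied at depth $L$ to the subinstances, whose recursion in turn eventually hits the dangling/free/base cases), and I would combine $|g(\hat X,\hat Y,\hat Z) - g(X,Y,Z)|$ with the partial-derivative bounds of $g$ to get something like $\tfrac14(|\hat X - X| + |\hat Y - Y| + |\hat Z - Z|) \le \tfrac34 \cdot 3(\tfrac12)^{L+1} \le 3(\tfrac12)^{L+1}$. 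The main obstacle I anticipate is a subtle circularity/well-foundedness issue: in the normal case the depth parameter does not drop, so one has to argue the induction is still well-founded — this works because the normal case is invoked at most once (the remark after the algorithm), so after one normal step every recursive call is dangling or free, and the dangling case strictly decreases $L$ by at least $1$ (since $\lceil \log_6(d+1)\rceil \ge 1$). I would therefore set up the induction carefully, perhaps on a lexicographic order or by first proving the bound for instances where $e$ is dangling/free by induction on $L$, and then deducing the normal case as a one-shot consequence. Getting the constants (the $3$ and the $M=6$) to fit together in both the dangling amplification bound and the normal contraction bound is the delicate part.
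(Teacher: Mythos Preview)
Your overall structure is right — prove the bound for dangling/free edges by induction on $L$, then handle the normal edge as a one-shot consequence — and this is exactly how the paper organizes it. But there are two genuine gaps in the argument as you have sketched it.

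\textbf{The dangling step does not close.} You bound the product error by telescoping with $p_i,\hat p_i\in[0,1]$, giving $|\prod\hat p_i-\prod p_i|\le d\epsilon$, and then multiply by a \emph{constant} Lipschitz bound $\sup|f'|\le 1/4$. As you yourself note, the amplification factor you then need to kill is of order $d^{1-\log_6 2}$, which grows without bound, so no constant Lipschitz factor can beat it. The missing ingredient is the a priori bound $P(G,e)\le 1/2$ (and likewise $P(G,e,L)\le 1/2$), which the paper establishes separately. Using $p_i,\hat p_i\in[0,1/2]$ in the telescoping gives $|\prod\hat p_i-\prod p_i|\le d(1/2)^{d-1}\epsilon$, and it is this extra $(1/2)^{d-1}$ that makes the inequality $\log_2 d+\lceil\log_6(d+1)\rceil\le d-1$ the thing to check. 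Equivalently, the paper bounds $\sum_i|\partial f/\partial x_i|\le\min\{1/2,\,d(1/2)^{d-1}\}$ directly, treating $f$ as a function of the vector $(x_1,\dots,x_d)$ rather than of the scalar product; your scalar-product parameterization throws away exactly the decay in $d$ that you need.

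\textbf{The factor of $3$ should not be carried through the induction.} The paper proves the \emph{tighter} bound $(1/2)^{L+1}$ (no $3$) for dangling/free edges, and the factor $3$ enters only once, at the normal-edge step, via the lemma $|g(\hat{\mathbf x},\hat{\mathbf y},\hat{\mathbf z})-g(\mathbf x,\mathbf y,\mathbf z)|\le 3\max_i\{|x_i-\hat x_i|,|y_i-\hat y_i|,|z_i-\hat z_i|\}$. If instead you feed the normal step sub-errors already of size $3(1/2)^{L+1}$, you get $9(1/2)^{L+1}$, not $3(1/2)^{L+1}$. Your alternative of viewing $g$ as a function of the three scalar products $X,Y,Z$ and claiming $|\hat X-X|\le 3(1/2)^{L+1}$ and $\sum|\partial g|\le 3/4$ does not work as stated either: $|\hat X-X|$ is a product error over $d_1$ terms (so again needs the $[0,1/2]$ bound to control), and the partials of $g$ in $X,Y,Z$ are not each $\le 1/4$ (for instance $|\partial g/\partial Z|=1/(2+XY-X-Z)^2$ can be as large as $1$). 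The clean fix is to do what you suggest in your last sentence but with the sharper target: prove $(1/2)^{L+1}$ for dangling/free first, then apply the normal-edge contraction once to pick up the $3$.
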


Such phenomenon is usually referred to as exponential correlation decay. Before we prove the main theorem, we will introduce a few useful propositions and lemmas.

\begin{proposition}
	\[P(G, e) \leq \frac{1}{2}\]
\end{proposition}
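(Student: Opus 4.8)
The plan is to prove this combinatorially by an injection, rather than by recursing through Section~3. Recall that $P(G,e)$ is the probability that a uniformly random $X \in EC(G)$ omits $e$; so it is enough to show that among all edge covers of $G$, the ones omitting $e$ are no more numerous than the ones containing $e$, since then at most half of the edge covers omit $e$.

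First I would fix $e$ as a specific edge-instance, so that ``$e \in X$'' makes sense even when $E$ is a multiset containing parallel dangling or free edges. Then I would write $EC(G) = A \sqcup B$ with $A = \set{X \in EC(G) : e \notin X}$ and $B = \set{X \in EC(G) : e \in X}$, and define $\phi : A \to B$ by $\phi(X) = X \cup \set{e}$. Two checks remain: (i) $\phi$ is well defined, i.e.\ $X \cup \set{e}$ is again an edge cover and lies in $B$ --- this is immediate because enlarging a set of chosen edges only enlarges the set of covered vertices, and it is irrelevant whether $e$ is normal, dangling, or free; and (ii) $\phi$ is injective, which holds since $e \notin X$ for $X \in A$, so $X$ is recovered from $\phi(X)$ by deleting the instance $e$. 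From $\abs{A} \le \abs{B}$ we get $\abs{A} \le \tfrac12\abs{EC(G)}$ and hence $P(G,e) \le \tfrac12$. I would also remark that this uses $EC(G) \neq \emptyset$, which is precisely the regime in which $P(G,e)$ is defined: a graph with no isolated vertex has the all-edges cover, and dangling and free edges never create isolated vertices.

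I do not expect a genuine obstacle. The only slightly delicate points are the bookkeeping for parallel edges and for the extended graph model, and both are handled by reading $e$ as a fixed edge-instance together with the fact that the edge-cover condition is monotone under adding edges. As a cross-check, the same bound also follows by structural induction on the computation tree via the three recursions of Section~3: the free case is an equality; the dangling case reduces to $\tfrac{1-p}{2-p} \le \tfrac12$, i.e.\ $p \ge 0$, where $p = \prod_{i=1}^d P(G_i,e_i) \in [0,1]$; and the normal case reduces to $2 + XY - X - Z \le 2$, i.e.\ $XY \le X + Z$, which follows from $Y \le 1$ and $Z \ge 0$ --- the only nuisance on this route being to verify that the denominators remain positive, a check the injection argument avoids altogether.
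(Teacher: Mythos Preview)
Your proposal is correct and follows essentially the same approach as the paper: the paper also argues combinatorially that the map $X \mapsto X + e$ is an injection from edge covers omitting $e$ to edge covers containing $e$, and even mentions (as you do) that one could alternatively verify the bound case by case from the recursion. Your write-up is somewhat more careful about multiset bookkeeping and the extended graph model, but the core argument is identical.
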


\begin{proof}
	Although one may examine this case by case algebraically, this proposition is quite obvious in a combinatorial view, for any edge cover $X \in EC(G)$ with $e \notin X$, $X+e$ is also an edge cover in $G$, and $\forall X,Y \in EC(G)$ s.t. $X \neq Y, e \notin X, e\notin Y$, we have $X+e \neq Y+e$. So the edge covers with $e$ chosen is at least as many as the edge covers with $e$ not chosen, hence the proposition follows.
\end{proof}

We remark that our algorithm also guarantees that $P(G,e,L) \leq \frac{1}{2}$,
since for the dangling case, $\frac{1 - \prod_i x_i}{2 - \prod_i x_i} = \frac{1}{2} - \frac{\prod_i x_i}{2(2 - \prod_i x_i)}$ ; and for normal case $X\cdot Y - X - Z \leq 0$.

For notational convenience, given $d$-dimensional vector ${\bf x} \in [0, \frac{1}{2}]^d$, we denote
\[ f({\bf x}) \triangleq \frac{1- \prod_i x_i}{2 - \prod_i x_i}\]

Given a $d_1$-dimensional vector ${\bf x} \in [0, \frac{1}{2}]^{d_1}$ and two $d_2$-dimensional vectors ${\bf y,z} \in [0, \frac{1}{2}]^{d_2}$, let
\[ g({\bf x,y,z}) \triangleq  1- \frac{1}{2+\prod_i x_i \cdot \prod_i y_i - \prod_i x_i - \prod_i z_i} \]

	\begin{lemma}
		\label{meanvalue1}
		For $d$-variate function $f$, given estimated $\hat{\bf x}$ for true value ${\bf x}$ such that $\hat{\bf x} \in  [0, \frac{1}{2}]^d, {\bf x} \in [0, \frac{1}{2}]^d$, let $\epsilon \triangleq \max_i{\abs{x_i - \hat{x_i}}}$,
		\begin{align*}
			\abs{f(\hat{\bf x}) - f({\bf x})}
		\leq  \min{\set{\frac{1}{2}, d \left( \frac{1}{2} \right)^{d-1}}} \cdot \epsilon
		\end{align*}
	\end{lemma}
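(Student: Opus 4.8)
The plan is to reduce the $d$-variate estimate to a one-variable one, exploiting that $f({\bf x})$ depends on ${\bf x}$ only through $p := \prod_{i=1}^d x_i$. Setting $h(p) := \frac{1-p}{2-p}$ we have $f({\bf x}) = h(p)$ and $f(\hat{\bf x}) = h(\hat p)$ with $\hat p := \prod_{i=1}^d \hat{x}_i$. Since $d \ge 1$ and every coordinate lies in $[0,\frac12]$, both $p,\hat p \in [0,\frac12]$. A direct computation gives $h'(p) = -\frac{1}{(2-p)^2}$, hence $\abs{h'(p)} \le 1$ for $p \in [0,1]$ and $\abs{h'(p)} \le \frac12$ for $p \in [0,\frac12]$; so by the mean value theorem $\abs{f(\hat{\bf x}) - f({\bf x})} = \abs{h(\hat p) - h(p)} \le \bigl(\sup_{[0,1/2]}\abs{h'}\bigr)\cdot\abs{\hat p - p}$.

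Next I would bound $\abs{\hat p - p}$ by the standard telescoping (hybrid) identity
\[
 \hat p - p \;=\; \sum_{k=1}^{d} \Bigl(\prod_{j<k}\hat{x}_j\Bigr)\,(\hat{x}_k - x_k)\,\Bigl(\prod_{j>k} x_j\Bigr).
\]
Every $\hat{x}_j$ and $x_j$ lies in $[0,\frac12]$ and $\abs{\hat{x}_k - x_k}\le\epsilon$, so the $k$-th term has absolute value at most $(\frac12)^{k-1}\epsilon(\frac12)^{d-k} = \epsilon(\frac12)^{d-1}$, and summing the $d$ terms gives $\abs{\hat p - p}\le d(\frac12)^{d-1}\epsilon$.

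Combining the two estimates in two ways completes the proof. Using $\abs{h'}\le 1$ gives $\abs{f(\hat{\bf x})-f({\bf x})}\le d(\frac12)^{d-1}\epsilon$, the second entry of the minimum. Using $\abs{h'}\le\frac12$ instead gives $\abs{f(\hat{\bf x})-f({\bf x})}\le \frac12 d(\frac12)^{d-1}\epsilon = d(\frac12)^{d}\epsilon$, and since $d\le 2^{d-1}$ for all $d\ge1$ (a one-line induction) we have $d(\frac12)^{d}\le\frac12$, so $\abs{f(\hat{\bf x})-f({\bf x})}\le\frac12\epsilon$. Taking the smaller of the two bounds yields exactly $\min\set{\frac12,\,d(\frac12)^{d-1}}\cdot\epsilon$.

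I do not expect a genuine obstacle; the only care needed is (i) verifying the relevant products stay in $[0,\frac12]$ so that $\abs{h'}\le\frac12$ is legitimate, and (ii) the bookkeeping in the telescoping sum. If one prefers to avoid the single-variable reduction, an equivalent route is to bound the partial derivatives $\partial f/\partial x_k = h'(p)\prod_{j\neq k}x_j$ and integrate $\nabla f$ along the segment from ${\bf x}$ to $\hat{\bf x}$, using $\sum_{k}\prod_{j\neq k}x_j\le d(\frac12)^{d-1}$; this produces the same bound.
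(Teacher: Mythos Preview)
Your proof is correct. The paper takes the partial-derivative route you mention at the end: it bounds $\sum_i\abs{\partial f/\partial x_i}$ directly by $d(\tfrac12)^{d-1}$ and then applies the mean value theorem along the segment from ${\bf x}$ to $\hat{\bf x}$. To obtain the $\tfrac12$ part of the minimum, however, the paper cannot simply use $d(\tfrac12)^{d-1}$ (which equals $1,1,\tfrac34$ for $d=1,2,3$), so it instead computes $\sum_i\abs{\partial f/\partial x_i}$ explicitly for $d=1,2,3$ (getting $\tfrac49,\tfrac{16}{49},\tfrac{16}{75}$) and only invokes $d(\tfrac12)^{d-1}\le\tfrac12$ once $d\ge 4$. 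Your single-variable reduction is cleaner here: because $p,\hat p\in[0,\tfrac12]$ forces $\abs{h'}\le\tfrac49<\tfrac12$, you pick up an extra factor of $\tfrac12$ beyond the telescoping estimate, giving $d(\tfrac12)^{d}\epsilon$, and then the uniform inequality $d\le 2^{d-1}$ handles all $d\ge1$ without case analysis. The paper's explicit small-$d$ constants are sharper (e.g.\ $\tfrac{16}{49}$ rather than $\tfrac12$ at $d=2$), but the lemma as stated does not need them, so your argument is both sufficient and more streamlined.
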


	\begin{proof}
        First for $d$-dimensional vector ${\bf x} \in [0, \frac{1}{2}]^d$,
        \begin{align}
			\label{mv1pd}
			\sum_i^d \abs{\cfrac{\partial f({\bf x})}{\partial x_i}} \leq & \min{\set{\frac{1}{2}, d \left( \frac{1}{2} \right)^{d-1}}}
		\end{align}

    	For $d=0, \sum_i\abs{ \cfrac{\partial f({\bf x})}{\partial x_i}} = 0$.

    	For $d=1, \sum_i\abs{ \cfrac{\partial f({\bf x})}{\partial x_i}} = \frac{1}{\left( 2 - x_1 \right)^2} \leq \frac{4}{9} $.

    	For $d=2, \sum_i\abs{ \cfrac{\partial f({\bf x})}{\partial x_i}} = \frac{x_1 + x_2}{\left( 2 - x_1x_2 \right)^2} \leq \frac{16}{49} $.

    	For $d=3, \sum_i\abs{ \cfrac{\partial f({\bf x})}{\partial x_i}} = \frac{x_1 x_2 + x_1 x_3 + x_2 x_3}{\left( 2 - x_1x_2x_3 \right)^2} \leq \frac{16}{75} $.

	Next by $\forall k, x_k \leq \frac{1}{2}$, 
    	\begin{align*}
    		\sum_i^d \abs{\cfrac{\partial f({\bf x})}{\partial x_i}}  =& \sum_i^d \cfrac{ \prod_{k \neq i}^d x_k  }{\left( 2 - \prod_i x_i \right)^2}
    		\leq  d \left( \frac{1}{2} \right)^{d-1}
    	\end{align*}

		So for $d \geq 4, \sum_i \abs{\cfrac{\partial f({\bf x})}{\partial x_i}} \leq \frac{1}{2}$, we have verified the inequality relation (\ref{mv1pd}).

        Now let $h(\alpha) = f(\alpha {\bf x} + (1 - \alpha) {\bf \hat{x}})$ where $\alpha \in [0,1]$, by (\ref{mv1pd}) and Mean Value Theorem, $\exists \tilde{\alpha} \in [0,1]$ s.t. for $\tilde{\bf x} = \tilde{\alpha} {\bf x} + (1 - \tilde{\alpha}) {\bf \hat{x}}$

		\begin{align*}
			\abs{f(\hat{\bf x}) - f({\bf x})} \leq& \sum_i\abs{ \frac{\partial f(\tilde{\bf x})}{\partial x_i}} \cdot \epsilon \\
            \leq&  \min{\set{\frac{1}{2}, d \left( \frac{1}{2} \right)^{d-1}}} \cdot \epsilon
		\end{align*}
	\end{proof}

	\begin{lemma}
		\label{meanvalue2}
		Given estimated $\hat{\bf x},\hat{\bf y},\hat{\bf z}$ for true value $ {\bf x}, {\bf y}, {\bf z}$ respectively, such that ${\bf x}, \hat{\bf x} \in [0, \frac{1}{2}]^{d_1} , {\bf y,z}, \hat{\bf y},\hat{\bf z} \in [0, \frac{1}{2}]^{d_2}$,
		let $\epsilon \triangleq \max_i \set{\abs{x_i - \hat{x_i}}, \abs{y_i - \hat{y_i} } , \abs{z_i - \hat{z_i}}}$,
		\begin{align*}
		\abs{g(\hat{\bf x}, \hat{\bf y}, \hat{\bf z}) - g({\bf x,y,z})}
		\leq  3\epsilon
		\end{align*}

	\end{lemma}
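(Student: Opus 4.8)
The plan is to mimic the proof of Lemma~\ref{meanvalue1}: bound the sum of the absolute values of all partial derivatives of $g$ on the box $[0,\frac12]^{d_1}\times[0,\frac12]^{d_2}\times[0,\frac12]^{d_2}$ by $3$, and then invoke the Mean Value Theorem along the segment joining $(\mathbf{x},\mathbf{y},\mathbf{z})$ and $(\hat{\mathbf{x}},\hat{\mathbf{y}},\hat{\mathbf{z}})$, which stays inside the box by convexity. First I would write $X \triangleq \prod_i x_i$, $Y \triangleq \prod_i y_i$, $Z \triangleq \prod_i z_i$ and $D \triangleq 2+XY-X-Z$, and differentiate: $\partial g/\partial x_i = (Y-1)\big(\prod_{k\neq i}x_k\big)/D^2$, $\partial g/\partial y_i = X\big(\prod_{k\neq i}y_k\big)/D^2$, $\partial g/\partial z_i = -\big(\prod_{k\neq i}z_k\big)/D^2$. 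Since every coordinate is at most $\frac12$, we have $\sum_i\prod_{k\neq i}x_k\le d_1\,2^{1-d_1}$ and likewise for $\mathbf y,\mathbf z$; combined with the elementary estimate $d\,2^{1-d}\le 1$ for every integer $d\ge0$, the sum of the absolute values of all partials is at most $\big((1-Y)+X+1\big)/D^2 = (2+X-Y)/D^2$. So the whole argument reduces to a good lower bound on $D$ over the box.

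The next step is the case analysis that makes $D$ controllable. If $d_1\ge1$, then $X=\prod_{i=1}^{d_1}x_i\le\frac12$; moreover, if additionally $d_2=0$ then $Y=1$, so every $x$-partial vanishes and the whole gradient is $\mathbf0$, whereas if $d_2\ge1$ then also $Z\le\frac12$, so $D = 2+XY-X-Z \ge 2-X-Z \ge 1$. In either sub-case $D\ge1$ and $2+X-Y \le 2+\frac12-0 = \frac52 \le 3$, hence the gradient bound $\le 3$ holds, and the Mean Value Theorem gives $\abs{g(\hat{\mathbf x},\hat{\mathbf y},\hat{\mathbf z})-g(\mathbf x,\mathbf y,\mathbf z)}\le 3\epsilon$ exactly as in Lemma~\ref{meanvalue1}.

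The remaining case $d_1=0$ needs separate care, since there $D$ can dip below $1$ on the full box and the uniform gradient bound fails. Here I would avoid the Mean Value Theorem entirely: when $d_1=0$ we have $X=\hat X=1$, and by definition $G_i^2 = G''-\sum_{k=1}^{d_1}e_k-\sum_{k=1}^{i-1}f_k = G''-\sum_{k=1}^{i-1}f_k = G_i^3$ for every $i$, so $P(G_i^2,f_i)=P(G_i^3,f_i)$ and also the algorithm returns $P(G_i^2,f_i,L)=P(G_i^3,f_i,L)$; hence $Y=Z$ and $\hat Y=\hat Z$, so $g(\mathbf x,\mathbf y,\mathbf z) = 1-\frac{1}{1+Y-Z} = 0 = 1-\frac{1}{1+\hat Y-\hat Z} = g(\hat{\mathbf x},\hat{\mathbf y},\hat{\mathbf z})$ and the claim is trivial.

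I expect the main obstacle to be precisely this $d_1=0$ degeneracy together with pinning down $D\ge1$ (rather than the weaker $D\ge\frac12$) in the other cases: a naive uniform gradient estimate over the box is too weak, and one must exploit the structural fact that the values actually fed into $g$ are not arbitrary points of the box but satisfy $Y=Z$ when $d_1=0$. Everything else — the differentiation, the $d\,2^{1-d}\le1$ estimate, and the Mean Value Theorem step — is routine and parallels Lemma~\ref{meanvalue1}.
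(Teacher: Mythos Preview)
Your approach is essentially the paper's: bound the $\ell_1$-norm of $\nabla g$ on the box and apply the Mean Value Theorem along the segment. The paper simply asserts $S_1,S_2,S_3\le 1$ using $\prod_{k\neq i}x_k\le 2^{1-d_1}$ (etc.) together with the implicit bound $W=(2+XY-X-Z)^2\ge 1$, and adds to get $3$; your version is the same computation, only you keep the factors $1-Y$ and $X$ a bit longer and arrive at the slightly sharper $(2+X-Y)/D^2\le \tfrac52$ when $d_1,d_2\ge 1$.

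Where you go beyond the paper is the $d_1=0$ case. You are right to single it out: with $d_1=0$ one has $X=1$ and $D=1+Y-Z$, which on the full box can drop to $\tfrac12$, so the uniform gradient bound genuinely fails (indeed, for $d_1=0,\,d_2=1$ the map $(y_1,z_1)\mapsto 1-\tfrac{1}{1+y_1-z_1}$ has gradient of $\ell_1$-norm up to $8$, so the lemma as a bare analytic statement is false there). The paper's proof tacitly uses $W\ge 1$ and does not address this degeneracy. Your fix --- observing that when $d_1=0$ the definitions give $G_i^2=G_i^3$, hence $Y=Z$ and $\hat Y=\hat Z$, so both sides equal $0$ --- is exactly what is needed to make the argument go through for the values that actually occur in the recursion. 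In short: same method, but you have patched a real gap in the paper's proof.
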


	\begin{proof}
Denote

$S_1({\bf x,y,z}) \triangleq \sum_k^{d_1} \abs{ \frac{\partial g({\bf x,y,z})}{\partial x_k} }$,

$S_2({\bf x,y,z}) \triangleq \sum_k^{d_2} \abs{ \frac{\partial g({\bf x,y,z})}{\partial y_k} }$,

$S_3({\bf x,y,z}) \triangleq \sum_k^{d_2} \abs{ \frac{\partial g({\bf x,y,z})}{\partial z_k} }$,

$S_4({\bf x,y,z}) \triangleq S_1({\bf x,y,z})+S_2({\bf x,y,z})+S_3({\bf x,y,z})$,

$X \triangleq \prod_i^{d_1} x_i$,
$Y \triangleq \prod_i^{d_2} y_i$,
$Z \triangleq \prod_i^{d_2} z_i$,

$W \triangleq (2+X Y-X-Z)^2$.

For ${\bf x} \in [0, \frac{1}{2}]^{d_1} , {\bf y,z} \in [0, \frac{1}{2}]^{d_2}$,
		\begin{align*}
		S_1({\bf x,y,z}) &= \frac{1}{W} \sum_k^{d_1} \prod_{i\neq k}^{d_1} x_i \cdot \left( 1 - Y \right)
		\leq \frac{d_1}{2^{d_1 - 1}}  \leq 1 \\
		S_2({\bf x,y,z}) &= \frac{1}{W} \sum_k^{d_2} \prod_{i\neq k}^{d_2} y_i \cdot X
		\leq  \frac{d_2}{2^{d_1 + d_2 - 1}} \leq 1 \\
		S_3({\bf x,y,z}) &= \frac{1}{W} \sum_k^{d_2} \prod_{i\neq k}^{d_2} z_i
		\leq \frac{d_2}{2^{d_2 - 1}} \leq 1
		\end{align*}

        Now let $h(\alpha) = g(\alpha {\bf x} + (1 - \alpha) {\bf \hat{x}}, \alpha {\bf y} + (1 - \alpha) {\bf \hat{y}}, \alpha {\bf z} + (1 - \alpha) {\bf \hat{z}})$ where $\alpha \in [0,1]$.

         By Mean Value Theorem, $\exists \tilde{\alpha} \in [0,1]$ s.t. for $\tilde{\bf x} = \tilde{\alpha} {\bf x} + (1 - \tilde{\alpha}) {\bf \hat{x}}, \tilde{\bf y} = \tilde{\alpha} {\bf y} + (1 - \tilde{\alpha}) {\bf \hat{y}}, \tilde{\bf z} = \tilde{\alpha} {\bf z} + (1 - \tilde{\alpha}) {\bf \hat{z}}$
		\begin{align*}
		\abs{g(\hat{\bf x}, \hat{\bf y}, \hat{\bf z}) - g({\bf x,y,z})} \leq &
		S_4(\tilde{\bf x},\tilde{\bf y},\tilde{\bf z} ) \epsilon
		\leq  3\epsilon
		\end{align*}
	\end{proof}



\noindent{\bf Proof of Theorem \ref{cd-main-theorem}. }
		Note that the recursion for normal edge case is applied only once, so it is sufficient to show that for free or dangling edge $e$:

		\[\abs{P(G,e,L) - P(G,e)} \leq (\frac{1}{2})^{L+1}\]
		
		And it automatically follows from Lemma \ref{meanvalue2} that for normal edge $e$:

		\[\abs{P(G,e,L) - P(G,e)} \leq 3\cdot(\frac{1}{2})^{L+1}\]

		Now we prove by induction with induction hypothesis that for free or dangling edge $e$:

		\[\abs{P(G,e,L) - P(G,e)} \leq (\frac{1}{2})^{L+1}\]
		
		For base case $L=0, \abs{P(G,e,L) - P(G,e)} \leq \frac{1}{2}$ holds when $e$ is free or dangling.

		Now suppose the induction hypothesis is true for $L<k$, we shall prove that it is true for $L=k$.

		{\bf Case 1}, $e$ is free edge $\abs{P(G,e,L) - P(G,e)} = 0$.

		{\bf Case 2}, $e=(u,\_)$ is a dangling edge, denote with $deg(u)=d+1$, then by induction hypothesis we have
        $\epsilon \triangleq \max_i \abs{P\left(G_i,e_i,L-\lceil \log_6{(d+1)}\rceil \right) - P(G_i,e_i)} \leq \left(\frac{1}{2}\right)^{L-\lceil \log_6{(d+1)}\rceil + 1}$.

		First by Lemma \ref{meanvalue1} we need to show that for $d \leq 4$,
        \[\frac{1}{2^{1+L-\lceil \log_6{(d+1)}\rceil + 1}} \leq \frac{1}{2^{L+1}}\]

		which is obvious because $\lceil\log_6{(d+1)}\rceil \leq 1$.

		Next we show for $d \geq 5$,
        \[ d\cdot \left( \frac{1}{2} \right)^{d-1 + L - \lceil \log_6{(d+1)}\rceil + 1}  \leq \left( \frac{1}{2} \right)^{L + 1} \]

		Namely for $d \geq 5$,
		\[ \log_2 d + \lceil \log_6{(d+1)} \rceil \leq d-1\]

		For $d=5,6$, one can directly examine that as $\log_2 d < 3$ and $\log_6 6 =1, \log_6 7 < 2$.


		For $d\geq 7$, since the function $f(x) = d-2 -\log_2 d - \log_6{(d+1)}$ is monotonically increasing, and $f(7)>0$, we have
        \[ \log_2 d + \log_6{(d+1)} + 1 \leq d-1\]

%
%
%
%

		Therefore, the hypothesis for $L=k$ is verified.

		To sum up, the case of free or dangling edge and the case of normal edge together conclude the proof for our main theorem.

\section{Counting Edge Covers}

Finally, we present the procedures for approximately counting edge covers given good estimates of the marginal probability $P(G,e)$, hence an FPTAS for the approximate counting of edge covers problem.

\begin{proposition}

    Let $Z(G) \triangleq \abs{EC(G)} \neq 0$ and $e_1,e_2,\ldots,e_m$ be an enumeration of the edges $E$ where $e_i = (u_i, v_i)$. Define $G_1 \triangleq G, G_i \triangleq G_{i-1} - e_{i-1} - u_{i-1} - v_{i-1}, 1 < i \leq m $. Then

	\[ Z(G) = \frac{1}{\prod_{i=1}^m (1 - P(G_i, e_i))} \]

\end{proposition}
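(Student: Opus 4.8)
The plan is to telescope the number of edge covers by repeatedly peeling off one edge together with its two endpoints, using the marginal probability $P(G_i, e_i)$ to account for each step exactly. First I would recall the definition $P(G, e) = \abs{EC(G-e)}/\abs{EC(G)}$ used throughout Section 3, so that $1 - P(G, e) = (\abs{EC(G)} - \abs{EC(G-e)})/\abs{EC(G)}$. The key observation is that $\abs{EC(G)} - \abs{EC(G-e)}$ counts exactly those edge covers of $G$ that \emph{must} include $e$, i.e. the edge covers $X$ with $e \in X$ such that $X - e$ fails to be an edge cover of $G - e$ --- but in fact every edge cover of $G$ either lies in $EC(G-e)$ (if it avoids $e$, noting $EC(G-e)\subseteq EC(G)$) or contains $e$. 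So $\abs{EC(G)} - \abs{EC(G-e)}$ is the number of edge covers of $G$ containing $e$; and the map $X \mapsto X - e$ sends these bijectively onto $EC(G - e - u - v)$, since deleting $e$ and its now-possibly-uncovered endpoints $u, v$ is precisely the instance whose edge covers extend (by adding $e$) to edge covers of $G$ containing $e$. Hence $\abs{EC(G)}(1 - P(G,e)) = \abs{EC(G - e - u - v)}$, which with the notation $G_{i+1} = G_i - e_i - u_i - v_i$ reads $Z(G_i)\,(1 - P(G_i, e_i)) = Z(G_{i+1})$.

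Given that identity the rest is a one-line induction: multiplying $Z(G_i) = Z(G_{i+1})/(1 - P(G_i, e_i))$ down the chain $i = 1, \dots, m$ gives
\[
Z(G) = Z(G_1) = \frac{Z(G_{m+1})}{\prod_{i=1}^m (1 - P(G_i, e_i))},
\]
and after removing all $m$ edges (and whatever vertices) the graph $G_{m+1}$ has no edges at all, so its unique edge cover is the empty set and $Z(G_{m+1}) = 1$. This yields the claimed formula. I would also note in passing that each factor $1 - P(G_i, e_i)$ is nonzero: by the Proposition $P(G_i, e_i) \le \tfrac12 < 1$, and $Z(G_i) \neq 0$ is maintained along the chain since $Z(G_{i+1}) = Z(G_i)(1 - P(G_i,e_i)) > 0$ whenever $Z(G_i) > 0$, starting from the hypothesis $Z(G) \neq 0$ (equivalently, $G$ has no isolated vertex, a property inherited by each $G_i$ because peeling $e_i$ and its endpoints never orphans a vertex).

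The main thing to get right --- and the only real content --- is the bijection $X \mapsto X - e$ between edge covers of $G$ containing $e$ and all edge covers of $G - e - u - v$. The subtlety is bookkeeping with dangling and free edges: when $u$ (or $v$) is deleted, the other edges incident to it become dangling or free, and one has to check that an arbitrary edge cover of $G-e-u-v$, together with $e$, covers every vertex of $G$ --- which holds because $u$ and $v$ are covered by $e$, and every other vertex of $G$ survives in $G - e - u - v$ with the same covering constraint. Conversely any edge cover of $G$ containing $e$, once $e$ is removed, still covers all vertices other than $u, v$, hence restricts to an edge cover of $G - e - u - v$. This is the same style of argument already used in the proofs of the dangling-edge Lemma and the normal-edge Lemma, so it should go through cleanly; I expect no genuine obstacle, only care with the extended-graph conventions.
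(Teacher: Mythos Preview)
Your proof is correct and rests on the same identity the paper uses, namely that conditioning on $e_i$ being present in a uniform edge cover of $G_i$ is the same as passing to $G_{i+1}=G_i-e_i-u_i-v_i$, so that $Z(G_i)\bigl(1-P(G_i,e_i)\bigr)=Z(G_{i+1})$. The only difference is packaging: the paper computes $\mathbb{P}(X=E)$ for $X\sim EC(G)$ uniform in two ways --- as $1/Z(G)$ and, via the chain rule, as $\prod_i \mathbb{P}\bigl(e_i=1\mid e_1=\cdots=e_{i-1}=1\bigr)=\prod_i\bigl(1-P(G_i,e_i)\bigr)$ --- whereas you spell out the underlying bijection $X\mapsto X-e$ and telescope explicitly. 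Your version is more verbose but also more self-contained (you check the bijection, the nonvanishing of the factors, and that no $G_i$ acquires an isolated vertex); the paper's version is a two-line probabilistic shortcut that leaves those checks implicit.
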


\begin{proof}

    With $X \sim EC(G)$ uniformly, $\mathbb{P}(X=E)$ has two expressions,
	\begin{align*}
		\mathbb{P} (X = E) =& \frac{1}{Z(G)} \\
		\mathbb{P} (X = E) =& \prod_i \mathbb{P} \left(e_i = 1 \mid \set{e_j}_{j=1}^{i-1} = \mathbf{1} \right) \\
		=&\prod_i (1- P(G_i, e_i))
	\end{align*}

	Therefore, 
	\[ Z(G) = \frac{1}{\prod_{i=1}^m (1 - P(G_i, e_i))} \]
\end{proof}

We now show the main theorem of this section.
Let $Z(G, L) \triangleq \frac{1}{\prod_{i=1}^m (1 - P(G_i, e_i, L))}$ be the estimated number of edge covers given estimated $P(G_i, e_i, L)$

\begin{theorem}
	For $0< \epsilon <1$, take $L=\log_2 m + \log_2(6/ \epsilon) $,
	\[ 1- \epsilon \leq \frac{Z(G, L)}{Z(G)} \leq 1+ \epsilon\]
\end{theorem}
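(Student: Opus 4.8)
The plan is to compare $Z(G,L)$ with $Z(G)$ factor by factor along the telescoping product supplied by the preceding proposition, convert the additive error bound of Theorem~\ref{cd-main-theorem} into a small \emph{relative} error in each factor, and then check that $m$ such relative errors, each of size $O(\epsilon/m)$, compound to at most $\epsilon$.

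First I would use the preceding proposition to write $Z(G)=\bigl(\prod_{i=1}^m (1-P(G_i,e_i))\bigr)^{-1}$, and by definition $Z(G,L)=\bigl(\prod_{i=1}^m (1-P(G_i,e_i,L))\bigr)^{-1}$, so that
\[ \frac{Z(G,L)}{Z(G)}=\prod_{i=1}^m \frac{1-P(G_i,e_i)}{1-P(G_i,e_i,L)} . \]
The crucial structural fact is that every factor in both products is bounded below: by the proposition $P(G,e)\le\tfrac12$, and by the remark following it $P(G,e,L)\le\tfrac12$, so $1-P(G_i,e_i)\in[\tfrac12,1]$ and $1-P(G_i,e_i,L)\in[\tfrac12,1]$ for each $i$.

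Next I would invoke Theorem~\ref{cd-main-theorem} on each pair $(G_i,e_i)$ to get $\abs{P(G_i,e_i,L)-P(G_i,e_i)}\le 3(\tfrac12)^{L+1}$; dividing this by the denominator $1-P(G_i,e_i,L)\ge\tfrac12$ gives
\[ \abs{\frac{1-P(G_i,e_i)}{1-P(G_i,e_i,L)}-1}\le \frac{3(\tfrac12)^{L+1}}{\tfrac12}=3\Bigl(\tfrac12\Bigr)^{L}. \]
Plugging in $L=\log_2 m+\log_2(6/\epsilon)$, i.e.\ $(\tfrac12)^L=\epsilon/(6m)$, the right-hand side is exactly $\epsilon/(2m)$. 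Hence each of the $m$ factors lies in $[\,1-\tfrac{\epsilon}{2m},\,1+\tfrac{\epsilon}{2m}\,]$, and taking the product over $i$,
\[ \Bigl(1-\tfrac{\epsilon}{2m}\Bigr)^{m}\le \frac{Z(G,L)}{Z(G)}\le \Bigl(1+\tfrac{\epsilon}{2m}\Bigr)^{m}. \]

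Finally I would finish with two elementary inequalities valid for $0<\epsilon<1$: Bernoulli's inequality gives $(1-\tfrac{\epsilon}{2m})^m\ge 1-\tfrac{\epsilon}{2}\ge 1-\epsilon$, and $1+t\le e^t$ gives $(1+\tfrac{\epsilon}{2m})^m\le e^{\epsilon/2}\le 1+\epsilon$, the last step because $t\mapsto 1+t-e^{t/2}$ vanishes at $0$ and is increasing on $[0,1]$. Combining the two displays above yields $1-\epsilon\le Z(G,L)/Z(G)\le 1+\epsilon$. I do not expect a genuine obstacle; the only point requiring care is precisely the compounding of $m$ multiplicative perturbations — this is why the factors must be kept uniformly bounded away from $0$ (so the additive decay bound of Theorem~\ref{cd-main-theorem} transfers to a clean per-factor relative error) and why the slack factor $6$, equivalently the term $\log_2(6/\epsilon)$ in $L$, is present.
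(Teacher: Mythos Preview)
Your proposal is correct and matches the paper's proof essentially step for step: the same telescoping product, the same use of Theorem~\ref{cd-main-theorem} together with $1-P(G_i,e_i,L)\ge\tfrac12$ to get a per-factor relative error of $\epsilon/(2m)$, and the same product bound $(1\pm\tfrac{\epsilon}{2m})^m$. You are merely more explicit than the paper in justifying the final step $(1-\tfrac{\epsilon}{2m})^m\ge 1-\epsilon$ and $(1+\tfrac{\epsilon}{2m})^m\le 1+\epsilon$, which the paper simply asserts.
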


\begin{proof}

	\begin{align*}
		\frac{Z(G, L)}{Z(G)} &= \prod_{i=1}^m \frac{1-P(G_i, e_i)}{1-P(G_i, e_i, L)}
	\end{align*}

	By Theorem \ref{cd-main-theorem},

	\[\abs{P(G_i, e_i, L) - P(G_i,e_i)} \leq \frac{\epsilon}{4m}\]

	Recall that $1-P(G_i,e_i, L) \geq \frac{1}{2}$,
	\[ \frac{\abs{P(G_i, e_i, L) - P(G_i,e_i)}}{1 - P(G_i,e_i, L)} \leq \frac{\epsilon}{2m}\]
	
	Namely $\forall i$,
	\[ \left( 1 - \frac{\epsilon}{2m} \right) \leq \frac{1-P(G_i, e_i)}{1 - P(G_i,e_i, L)} \leq \left( 1 + \frac{\epsilon}{2m} \right)\]

	So we have
	\[ \left( 1 - \frac{\epsilon}{2m} \right)^m \leq \prod_{i=1}^m \frac{1-P(G_i, e_i)}{1 - P(G_i,e_i, L)} \leq \left( 1 + \frac{\epsilon}{2m} \right)^m\]
	\[ 1- \epsilon \leq \frac{Z(G, L)}{Z(G)} \leq 1+ \epsilon\]

\end{proof}

To sum up, since $Z(G, L)$ involves $m$ calls to $P(G,e,L)$, so run $Z(G, L)$ with $L = \log_2 m + \log_2(6/ \epsilon)$, is an FPTAS for counting edge covers with overall running time $O(m \cdot n^2 \cdot ( m\cdot \frac{1}{\epsilon})^ {\log_2 6} )$.

\section{Open Problems}
We have presented an FPTAS for approximately counting the number of edge covers for any graph. Similarly as the counting weighted independent sets with fugacity parameter $\lambda$, a natural question to ask is whether there is also an FPTAS for approximately counting weighted edge covers, or formally, is there an FPTAS to approximate the following partition function $Z_G(\lambda)$:
\[Z_G(\lambda) \triangleq \sum_{X \in EC(G)} \lambda^{\abs{X}}\]

Also, will there be a phase transition as in the case of counting independent sets? Note that our current approach can be directly extended to the case where $\lambda$ is not too small (e.g. $\lambda > \frac{4}{9}$), leaving the region where $\lambda$ being small open.

As we have noted previously, an alternative view point of the edge cover problem is Rtw-Mon-CNF, hence other natural problems are:
\begin{itemize}
	\item For what integer value of $k$, counting read $k$ times monotone CNF admits an FPTAS?
	\item For counting read twice CNF (Rtw-CNF), is there an FPTAS?
\end{itemize}
We remark that Rtw-CNF admits FPRAS~\cite{TwiceSAT}, while even counting read thrice 2CNF (without the monotone restriction) is as hard as counting 2CNF (without the read restriction) and hence does not admit FPRAS unless $RP=NP$.
However to the best of our knowledge, it is still open even whether counting Rtw-3CNF admits FPTAS.
In general, it is of interest to see how far the correlation decay technique could get in designing FPTAS for counting problems.

\bibliographystyle{plain}

\bibliography{refs}
\end{document}